\documentclass[conference]{IEEEtran}
\IEEEoverridecommandlockouts

\usepackage{cite}
\usepackage{amsmath,amssymb,amsfonts,amsthm}
\usepackage{graphicx}
\usepackage{textcomp}
\usepackage{xcolor}
\usepackage{caption}
\usepackage{algorithm}
\usepackage[noend]{algpseudocode}
\usepackage{xurl}
\usepackage{float}

\def\BibTeX{{\rm B\kern-.05em{\sc i\kern-.025em b}\kern-.08em
    T\kern-.1667em\lower.7ex\hbox{E}\kern-.125emX}}
\makeatletter
\def\fs@ruled{%
  \def\@fs@cfont{\bfseries}%
  \let\@fs@capt\floatc@ruled
  \def\@fs@pre{\vspace*{0.09in}\hrule height .8pt depth0pt \kern2pt}%
  \def\@fs@post{\kern2pt\hrule\relax}%
  \def\@fs@mid{\kern2pt\hrule\kern2pt}%
  \let\@fs@iftopcapt\iftrue
}
\makeatother

\newtheorem{theorem}{Theorem}

\newcommand{\vs}{\boldsymbol{s}}
\newcommand{\vz}{\boldsymbol{z}}
\newcommand{\vx}{\boldsymbol{x}}
\newcommand{\vh}{\boldsymbol{h}}
\newcommand{\vmu}{\boldsymbol{\mu}}
\newcommand{\vell}{\boldsymbol{\ell}}
\newcommand{\mSigma}{\boldsymbol{\Sigma}}

\newcommand{\kaan}[1]{\textcolor{red}{Kaan: #1. }}

\begin{document}

\title{Top-P  Sensor Selection for Target Localization 

\thanks{* indicates equal contribution.}
\thanks{The work was supported by the Army Research Laboratory grant under Cooperative Agreement W911NF-17-2-0196.}
}

\author{\IEEEauthorblockN{Kaan Buyukkalayci$^\dagger$*,  Kyle Pak$^\dagger$*, Merve Karakas$^\dagger$, Xinlin Li$^\dagger$, and Christina Fragouli$^\dagger$\\ 
$^\dagger$University of California, Los Angeles\\
Email:\{kaanbkalayci, whilewak, mervekarakas, xinlinli, christina.fragouli\}@ucla.edu}}

\maketitle

\begin{abstract}
We study set-valued decision rules in which performance is defined by the inclusion of the top-$p$ hypotheses, rather than only the single best or true hypothesis. This criterion is motivated by sensor selection for target tracking, where inexpensive measurements are used to identify a list of sensor nodes that are likely to be closest to a target. We analyze the performance of top-$p$ versus top-$1$ selection under sequential hypothesis testing, propose a geometry-aware sensor selection algorithm, and validate the approach using real testbed data.
\end{abstract}


\section{Introduction}
In a substantial body of information-theoretic work, the decoder or decision rule is permitted to produce a set-valued output rather than a single estimate. This framework includes list decoding in channel coding, where the decoder outputs a list of candidate codewords, as well as list-based hypothesis testing, in which the decision rule returns a subset of hypotheses. In both settings, the fundamental performance criterion is the probability that the true codeword or hypothesis is not contained in the output set.

However, in some applications, the performance criterion is not limited to the inclusion of the true codeword or hypothesis in the output list. Instead, we may have a distance (or similarity) metric on the hypothesis or codeword space and want to evaluate the decoder or decision rule based on whether the output list contains a specified collection of nearest neighbors of the true element (e.g., the first, second, or 
$k$-th closest elements under this metric). In such a setting, error occurs not merely when the true codeword or best hypothesis is missing from the list, but when the list fails to include the metric-defined best-$k$ codewords or hypotheses.

This paper considers such a criterion in the context of sensor selection for target localization. We study a scenario in which a target moves within an area monitored by   sensor nodes and leverage low-cost sensing modalities, such as acoustic measurements, to generate a list of  sensor nodes that are most likely to be closest to the target. The motivation is that higher-fidelity sensing modalities, such as video, are available at the sensor nodes but are not intended to be activated ubiquitously. Instead, we are interested in an algorithm that uses inexpensive measurements to identify a subset of candidate sensors, so that we activate the higher-fidelity modalities only at these  nodes. For our algorithm, performance is not determined solely by the identification of the single closest sensor, but by the quality of the entire selected list: since all selected sensors are activated, the objective is to ensure that each sensor in the list is among the closest to the target to be informative for tracking.

Our contributions in this paper are as follows:
\begin{itemize}
\item  We analytically examine how localization accuracy varies under top-$p$ list selection for different $p$-values within a sequential hypothesis testing framework.
\item  We then propose an algorithm that takes advantage of geometry, such as implicitly the fact that the top $p$ sensors would be close to each other geographically.
We extend this algorithm both for the case of a single and multiple target localization.
\item We experimentally explore how top-$p$ instead of top-$1$ accuracy changes over real testbed data, using data driven modeling of the sensor measurements.
\end{itemize}

\noindent \textbf{Related Work.} Traditional target tracking and localization in WSNs typically use continuous-state estimators (e.g., Kalman filters) and sequential change detection \cite{zhao-2002, page-1954, lorden-1971, tartakovsky-2009}, where sensor activation is then optimized for estimation accuracy or detection delay, with substantial work on energy-aware tracking and coordination to reduce sensing/communication costs \cite{pattem-2003, chen-2003, chen-2006, madhavi-2023}. These approaches are tailored to persistent state reconstruction and generally do not produce ranked or set-valued outputs aligned with downstream actions.

With increasing on-node compute and edge communication in modern deployments, the bottleneck shifts from data acquisition to scalable distributed inference \cite{baek2025,zou2022}. This motivates selective processing and fusion over large sensor collections, including set-valued sensor selection for localization \cite{bhattacharya2021selection}. We take a related but distinct view: rather than estimating coordinates, we seek a small set of sensors likely to be closest to the target, motivated by hybrid sensing architectures where low-cost modalities trigger high-cost actions \cite{kulkarni-2005, jurdak-2010}. Works such as \cite{bsearch,asilomar} similarly produce coarse spatial information via partitioning, and \cite{milcom} proposes a Bayesian-estimation-based single-target method akin to our construction in Section~\ref{sec:algorithm}; however, these do not study ordered set-valued sensor selection as a primary objective.

On the theory side, prior work analyzes consistency/calibration of ordered top-$k$ rules \cite{lapin2016loss, yang2020topk} and adaptive prediction sets that trade list size for error control \cite{sadinle2019least}, with list-decodable learning returning small candidate sets under ambiguity/corruption \cite{karmalkar2019}. These frameworks typically order by labels/posteriors and do not enforce metric-structured correctness, e.g., top-$k$ nearest neighbors under a spatial metric. To our knowledge, the criterion studied here is new in sensor selection/localization.

\section{Setup and Notation}
We write $\|\cdot\|$ to denote the Euclidean distance, and $[i] = \{1,2,\dots,i\}$ for any $i \in \mathbb{N}$ with $i > 0$, where $\mathbb{N}$ denotes the set of natural numbers. For an ordered index set $I = \{i_1, i_2, \dots, i_{|I|}\}$ with $i_1 < i_2 < \dots < i_{|I|}$, we define the vector $\mathbf{v} = [u_i]_{i \in I}$ as $\mathbf{v} := (u_{i_1}, u_{i_2}, \dots, u_{i_{|I|}})^\top$. We further use $\mathbf{1}\{\cdot\}$ to denote the indicator function of an event $A$, equal to $1$ if $A$ holds and $0$ otherwise.

We consider a target whose location $\vell_t$ at time $t$ is unknown along with an area equipped with $s$ sensor nodes whose locations are denoted by $\{\vs_i\}_{i=1}^{s}$, each of which is endowed with multiple sensing modalities exhibiting different information characteristics. The aim is to leverage inexpensive and noisy modalities across all nodes to “recommend” a list of size $p$ consisting of the sensor nodes that are most likely to be the closest $p$ nodes to the target. 

For the sake of analysis in Section~\ref{sec:p_closest analysis} and for the algorithm in Section~\ref{sec:algorithm}, we define a finite set of hypothesis locations $\mathcal{H} = \{\vh_1,\dots\vh_{|\mathcal{H}|}\}$, obtained by discretizing the problem area into an arbitrarily fine square grid, to which the true location $\vell_t$ is assumed to belong. Furthermore, we assume that no two sensor nodes are at the same distance from any location in $\mathcal{H}$ and that target is stationary in each time block indexed by $t$.

\section{Normalized Max Value Selection}
\label{sec:p_closest analysis}

Assuming a parametric propagation model is available, we consider a max-value selection baseline, which, at each time, normalizing the measurements, takes the top-$p$ of these, and returns the associated sensor nodes as a candidate list for the closest-$p$ set. The method is appealing in large deployments due to its minimal computational overhead. In addition, its error probability admits an exact characterization under Gaussian noise and is also reasonable for inexpensive sensing modalities, e.g., acoustic receivers, where standard RSS baselines are commonly approximated by Gaussians that reflect outdoor propagation variability. We next analyze this approach.

\subsection{Modeling of Sensor Measurements}
\label{subsec:model}
In many wireless communication and outdoor acoustic propagation settings, a common baseline model for the received signal strength (RSS) measurements $z_i^t$ (in dB), $i\in[s]$, collected by $s$ sensors from a source at location $\vell_t$ at time $t$, is linear in the log-distance domain.
\[
z^t_i = P_{oi} - \eta_i \cdot 10\log_{10}(d_{i}^t) + \epsilon^t_i,
\]
where $P_{oi}$ and $\eta_i>0$ are, in our case, learnable parameters by ordinary least squares (OLS) regression, $d_i^t=\|\vell_t-\vs_i\|$ denotes the Euclidean distance between the target location and sensor $i$, and $\epsilon_i^t\sim\mathcal N(0,\sigma_i^2)$ is independent zero-mean Gaussian noise corrupting the measurement at sensor $i$. The noise terms $\{\epsilon^t_i\}_{i \in [s]}$ are assumed independent across sensors. Figure~\ref{fig:path_loss_fits} illustrates this fitting on real-world data obtained for different acoustic sensors scattered around the sensing area.

\begin{figure}
    \centering
    \includegraphics[width=0.9\linewidth]{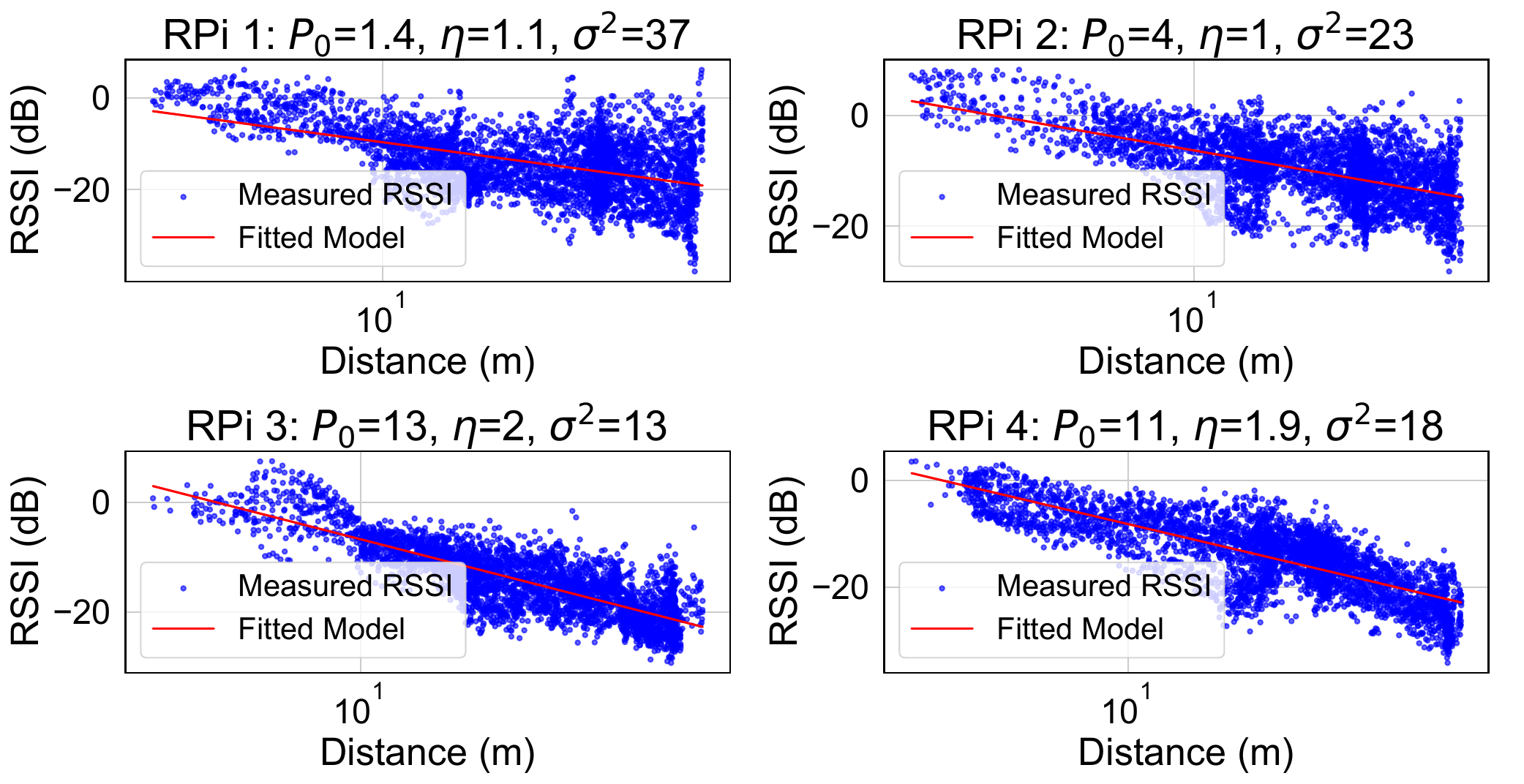}
    \caption{Fitted propagation models for different sensor nodes.}
    \label{fig:path_loss_fits}
    \vspace{-0.25in}
\end{figure}

\subsection{Error Probability under Top-$p$ Selection}
Without loss of generality, the measurements $z_i^t$ can be normalized so that the corresponding observations satisfy
\[
\tilde z_i^t = -10\log_{10}(d_i^t) + \tilde\epsilon_i^t,
\]
where $\tilde\epsilon_i^t \sim \mathcal{N}(0, \tilde\sigma_i^2)$ for $i \in [s]$. This is obtained from the original model by defining $\tilde z_i^t := \big(z_i^t - P_{oi}\big)/\eta_i$ and absorbing the scaling into the noise variance. We drop the superscript $t$ wherever the meaning is clear for the sake of notational clarity and write $\tilde z_i,\tilde\epsilon_i,d_i$ for the remainder of this paper.


\begin{theorem} 
\label{thm:top-p-error}
Consider the top-$p$ decision rule $\widehat K=\arg\max_{|K|=p}\sum_{i\in K} \tilde z_i$. Equivalently $\widehat K$ is the index set of the $p$ largest normalized measurements. Let $K(\vh) \subseteq [s]$ denote the index set of the $p$ closest sensors to $\vh$. Then, under a uniform prior over $\mathcal{H}$ for the true location $\vell_t$, the corresponding error probability, describing the event $\{\widehat K \neq K(\vell_t)\}$, is given by
\[
P_e^{(p)}
=
1-\frac{1}{|\mathcal H|}
\sum_{\vh\in\mathcal H}
\Phi_{p(s-p)}\!\left(\mathbf 0;\ -\mathbf m_{K(\vh)}(\vh),\ \mSigma_{K(\vh)}\right),
\]
where, for an index set $K \subseteq [s]$ and location $\vh \in \mathcal H$,
\[
\mathbf m_{K}(\vh)
:=
\big[\,\mu_i(\vh)-\mu_j(\vh)\,\big]_{(i,j)\in K\times K^c},
\]
with $\mu_i(\vh)=-10\log_{10}\|\vh-\vs_i\|$, $i \in [s]$, and $\mSigma_{K(\vh)}$ has entries indexed by pairs $\left((i,j),(i',j')\right)\in (K(\vh)\times K(\vh)^c) \times (K(\vh)\times K(\vh)^c)$ given by
\begin{align*}
\text{Cov}(\tilde z_i- \tilde z_j,\;\tilde z_{i'}-\tilde z_{j'}) 
=
\tilde \sigma_i^2\mathbf 1\{i=i'\}
+\tilde \sigma_j^2\mathbf 1\{j=j'\}\\
-\tilde \sigma_i^2\mathbf 1\{i=j'\}
-\tilde \sigma_j^2\mathbf 1\{j=i'\}.
\end{align*}
Here $\Phi_{p(s-p)}(\cdot;\mathbf m,\mSigma)$ denotes the CDF of a $p(s-p)$-dimensional Gaussian with mean $\mathbf m$ and covariance $\mSigma$, defined by $\Phi_{p(s-p)}(\mathbf x;\mathbf m,\Sigma)=\int_{(-\infty,\mathbf x]} \mathcal N(\mathbf u;\mathbf m,\Sigma)\,d\mathbf u$.

\end{theorem}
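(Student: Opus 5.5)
We condition on the true location and use the law of total probability. Under the uniform prior,
\[
P_e^{(p)} = 1 - \frac{1}{|\mathcal H|}\sum_{\vh\in\mathcal H}\Pr\big(\widehat K = K(\vh)\mid \vell_t=\vh\big),
\]
so it suffices to show, for a fixed $\vh$, that the conditional probability of correct selection equals $\Phi_{p(s-p)}(\mathbf 0;-\mathbf m_{K(\vh)}(\vh),\mSigma_{K(\vh)})$. Conditioned on $\vell_t=\vh$, we have $\tilde z_i=\mu_i(\vh)+\tilde\epsilon_i$ with independent $\tilde\epsilon_i\sim\mathcal N(0,\tilde\sigma_i^2)$. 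By the standing assumption of no distance ties, $K(\vh)$ is well defined.

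The key combinatorial step is the event identity
\[
\{\widehat K = K\}=\bigcap_{(i,j)\in K\times K^c}\{\tilde z_i>\tilde z_j\},
\]
with $K=K(\vh)$. The rule $\widehat K$ returns the indices of the $p$ largest values. Ties among the continuous $\tilde z_i$ occur with probability zero, so $\widehat K$ is almost surely unique. Given this, $\widehat K=K$ holds exactly when every element of $K$ exceeds every element of $K^c$:
\begin{itemize}
\item If some $j\in K^c$ had $\tilde z_j\ge \tilde z_i$ for some $i\in K$, swapping $i$ and $j$ would give a set with sum at least as large. Uniqueness then rules out $\widehat K = K$.
\item Conversely, if all pairwise inequalities hold, then $K$ consists precisely of the $p$ largest values.
\end{itemize}

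Next, we define the $p(s-p)$-dimensional vector $\mathbf D=[\tilde z_i-\tilde z_j]_{(i,j)\in K\times K^c}$. It is a linear map of a Gaussian vector, hence jointly Gaussian. Its mean is $\mathbb E[\tilde z_i-\tilde z_j]=\mu_i(\vh)-\mu_j(\vh)$, which is exactly $\mathbf m_K(\vh)$.

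For the covariance, we expand $\mathrm{Cov}(\tilde\epsilon_i-\tilde\epsilon_j,\tilde\epsilon_{i'}-\tilde\epsilon_{j'})$ bilinearly into four terms and use independence, so that $\mathrm{Cov}(\tilde\epsilon_a,\tilde\epsilon_b)=\tilde\sigma_a^2\mathbf 1\{a=b\}$. This gives the four indicator terms in the statement. Note that $i=j'$ and $j=i'$ are impossible, since $i,i'\in K$ and $j,j'\in K^c$. Those two terms therefore vanish, and the formula holds trivially.

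Finally, the probability of correct selection is $\Pr(\mathbf D>\mathbf 0)=\Pr(-\mathbf D<\mathbf 0)$. The vector $-\mathbf D$ is Gaussian with mean $-\mathbf m_K(\vh)$ and the same covariance $\mSigma_K$. Its probability of lying in the orthant $(-\infty,\mathbf 0]$ is therefore $\Phi_{p(s-p)}(\mathbf 0;-\mathbf m_K(\vh),\mSigma_K)$. Strict versus non-strict inequality does not matter, because each coordinate has positive variance $\tilde\sigma_i^2+\tilde\sigma_j^2$ and so is non-atomic.

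There are two points of care, and we treat them as the main obstacle. First, $\mSigma_K$ is typically singular when $p(s-p)>s-1$, because the differences span only an $(s-1)$-dimensional space. The CDF must therefore be interpreted as the distribution function of a possibly degenerate Gaussian, that is, as $\Pr(-\mathbf D\le \mathbf 0)$, rather than through a density. We handle this by defining $\Phi$ via the law of $-\mathbf D$, and the orthant-probability argument above never uses a density. Second, the argument relies on the event identity being exact up to null sets, which is where the almost-sure uniqueness of $\widehat K$ is used.

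Averaging the resulting conditional success probability over $\vh$ and subtracting from one yields the stated $P_e^{(p)}$.
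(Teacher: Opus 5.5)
Your proof is correct and follows the same route as the paper. Both condition on $\vell_t=\vh$ and express correct selection as nonnegativity of the pairwise-difference vector $\mathbf D=A_{K(\vh)}\vz$. Both identify $\mathbf D$ as Gaussian with mean $\mathbf m_{K(\vh)}(\vh)$ and covariance $A_{K(\vh)}\mSigma_s A_{K(\vh)}^\top$, evaluate the orthant probability, and average over the uniform prior.

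Your extra care consists of valid refinements that the paper's proof passes over. You treat ties as null events. You note that the $\mathbf 1\{i=j'\}$ and $\mathbf 1\{j=i'\}$ terms vanish identically, since $K(\vh)\cap K(\vh)^c=\emptyset$. Most usefully, you observe that $\mSigma_{K(\vh)}$ has rank at most $s-1$, so it is \emph{always} singular when $p(s-p)>s-1$, not merely typically. The paper's density-based definition of $\Phi_{p(s-p)}$ must therefore be read as the distribution function of $-\mathbf D$.
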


\noindent \textit{Proof Outline.} For all $\vh\in\mathcal{H}$, let $K(\vh)\subseteq[s]$ denote the true top-$p$ index set, with complement $K(\vh)^c=[s]\setminus K(\vh)$. For every $i \in K(\vh)$ and $j \in K(\vh)^c$, requiring that $\tilde z_i - \tilde z_j> 0$ describes the correct event. Since  $\tilde z_i$ and $\tilde z_j$ are Gaussian, $\tilde z_i - \tilde z_j$ is also a Gaussian quantity for all $(i,j)$.

We provide the full proof of Theorem \ref{thm:top-p-error} in Appendix \ref{app:top-p-proof} \cite{Appendix}. Theorem \ref{thm:top-p-error} implies that, although the error probability admits an exact characterization, its numerical evaluation requires computing multiple multivariate Gaussian CDFs, whose computational cost scales poorly as the number of nodes $s$ increases. Moreover, while the expression in Theorem \ref{thm:top-p-error} naturally extends to settings with correlated sensor noise by appropriately modifying the covariance matrices $\mSigma_K(\vh)$, the special structure induced by independent noise across sensors can be exploited to express the error probability in terms of only one-dimensional Gaussian CDFs, leading to a formulation whose computational cost scales more favorably with $s$.

\textbf{Corollary 1.} 
Under the top-$p$ decision rule in Theorem \ref{thm:top-p-error}, the error probability under a uniform prior over $\mathcal H$ admits the following equivalent one-dimensional integral representation:
\begin{align*}
&P_e^{(p)}
=
1-\frac{1}{|\mathcal H|}
\sum_{\vh\in\mathcal H}
\int_{-\infty}^{\infty}
\Bigg[
\prod_{i\in K(\vh)}
\Big(1-\Phi\!\Big(\tfrac{v-\mu_i(\vh)}{\tilde \sigma_i}\Big)\Big)
\Bigg]\\&
\Bigg[
\sum_{j\in K(\vh)^c}
\frac{1}{\tilde \sigma_j}\phi\!\Big(\tfrac{v-\mu_j(\vh)}{\tilde \sigma_j}\Big)
\prod_{\ell\in K(\vh)^c\setminus\{j\}}
\Phi\!\Big(\tfrac{v-\mu_\ell(\vh)}{\tilde \sigma_\ell}\Big)
\Bigg]\, dv.
\end{align*}
where $\phi(\cdot)$ and $\Phi(\cdot)$ denote the probability density function and cumulative distribution function of the standard univariate normal distribution, respectively.

\begin{proof}[Proof Outline]
Letting $U(\vh) \triangleq\min_{i\in K(\vh)} \tilde z_i$ and $V(\vh)\triangleq\max_{j\in K(\vh)^c}\tilde z_j$, the correct event reduces to $
U(\vh)>\; V(\vh)$. Since the measurement noises are independent and Gaussian, the probability $\mathbb P(U(\vh) > V(\vh)\mid \vh)$ can be expressed as a one-dimensional integral by conditioning on $V(\vh)$ and integrating over its marginal distribution.
\end{proof}

\begin{figure}
\centering
\includegraphics[width=0.85\linewidth]{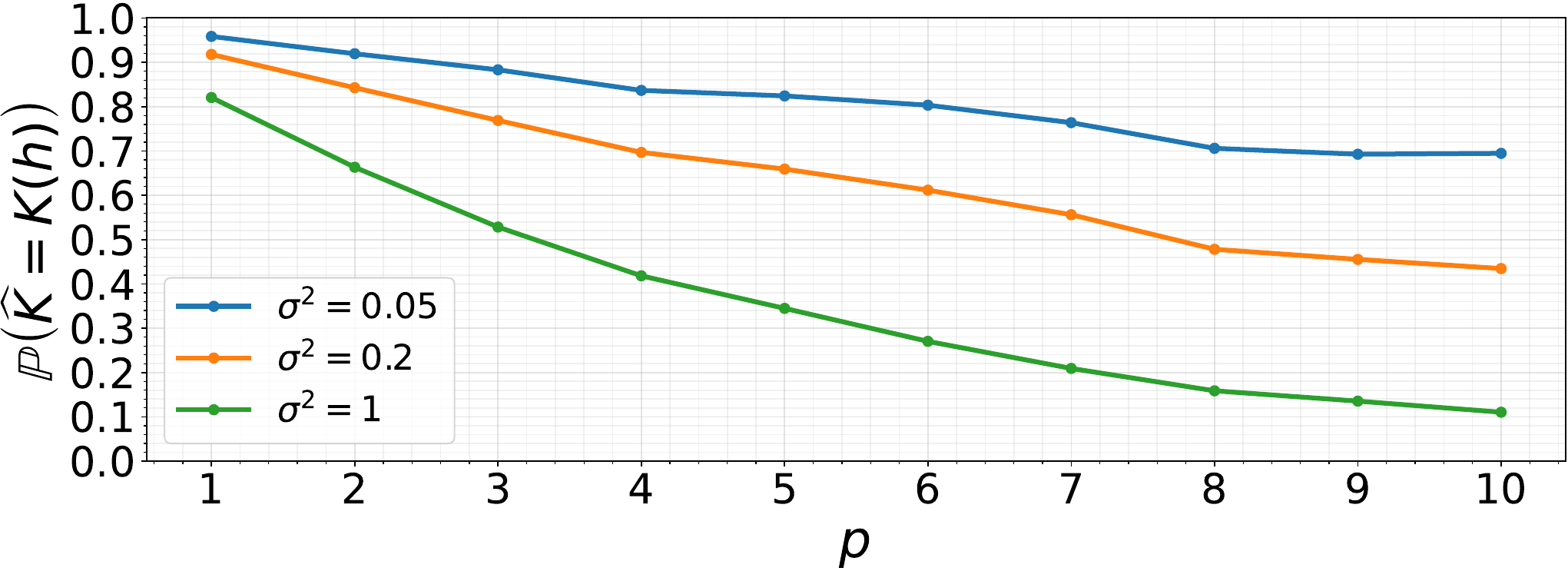}
\caption{Accuracy vs. selection set size for different noise variances and $s$=20 placed randomly in $[0,1]\times[0,1]$}
\vspace{-2em}
\label{fig:acc_vs_p}
\end{figure}

A complete proof is given in Appendix \ref{app:cor}\cite{Appendix}. Numerical evaluation of the expression in Corollary~1 reveals that accuracy decreases rapidly as the output set size $p$ increases. As illustrated in Fig.~\ref{fig:acc_vs_p} for the uniform noise variance case, most of the performance loss occurs as $p$ grows from $1$ to a small number of candidates, where higher variance leads to a steeper initial drop and earlier saturation of accuracy. Beyond a modest value of $p$, accuracy saturates near zero, and further increases in the list size have little additional effect. Consequently, $p$ should be selected as a function of the noise level to meet an application-specific accuracy threshold.

\section{Improved Algorithms and Modeling for List Construction}
\label{sec:algorithm}

Although the decision rule in Section~\ref{sec:p_closest analysis} offers an easily scalable way to construct a list of estimated closest sensor nodes as the number of sensors grows, it does not exploit the inherent spatial structure of the problem, specifically, that measurements are geometrically correlated and that the $p$ closest nodes are themselves likely to be physically close. Instead, a Bayesian estimation–based algorithm may be employed that first computes a posterior distribution over the grid locations $\mathcal H$, and then constructs the output list by selecting sensor nodes that are physically closest to the most probable locations. Although this approach offers improved robustness to noise, its computational cost grows linearly with $|\mathcal H|$ unless special heuristics are applied to approximate the posterior distribution. Sec. \ref{subsec:km-alg} describes such an algorithm.

\subsection{A Bayesian-Estimation Based Algorithm}
\label{subsec:km-alg}

When a model $p(z_i \mid \vh)$ for signal propagation over the sensing environment is available, and a uniform prior over $\mathcal H$ is assumed, the posterior probability of each $\vh \in \mathcal H$ being the true location $\vell_t$ can be computed as
\begin{align*}
\label{eq:bay_update}
&p(\vh\mid\{z_i\}_{i=1}^{s}) \propto p(\{z_i\}_{i=1}^{s} \mid \vh) \, p(\vh) \notag\propto \prod_{i=1}^s p(z_i\mid\vh)
\end{align*}
where $p(\vh) = \mathbb{P}(\vh = \vell_t)$, and the posterior is renormalized to sum to one. Although posterior inference under non-uniform priors is equally straightforward, it is possible to prefer uniform priors in practice to account for potential discontinuities between time blocks in which the algorithm operates.

We form the output set by selecting the $k$ highest-posterior hypotheses $\{\vh^{(1)},\dots,\vh^{(k)}\}$ and, for each $\vh^{(l)}$, taking the indices of the $m$ nearest sensors:
\[
\mathcal M_l
=
\operatorname{BottomM}_{i\in[s]} \ \|\vs_i-\vh^{(l)}\|
\]
where $\operatorname{BottomM}$ returns the $m$ smallest elements of its argument. The output is the union
\(
\widehat{\mathcal S}
=
\bigcup_{l=1}^{k}\mathcal M_l.
\)
In practice, the top posterior hypotheses often cluster, so the sets $\{\mathcal M_l\}_{l=1}^k$ overlap and $|\widehat{\mathcal S}|$ is typically well below the worst-case $km$. 

The full procedure is summarized in Algorithm~\ref{alg:k_m_alg}, where $\operatorname{TopK}$ returns the $k$ largest elements of its argument.


\begin{algorithm}
\caption{Bayesian List Construction via  Aggregation}
\label{alg:k_m_alg}
\begin{algorithmic}[1]
\For{$t = 1,2,\dots$}
    \State Observe measurements $\{z_i\}_{i=1}^s$
    \For{each $\vh \in \mathcal H$}
        \State Compute $p(\vh\mid\{z_i\}_{i=1}^s) \propto \prod_{i=1}^s p(z_i\mid \vh)$
    \EndFor
    \State Select $\{\vh^{(1)},\dots,\vh^{(k)}\} \gets \operatorname{TopK}_{\vh \in \mathcal H}\, p(\vh\mid\{z_i\}_{i=1}^s)$
    \For{$l=1,\dots,k$}
        \State $\mathcal M_l \gets \operatorname{BottomM}_{i\in[s]} \|\vs_i-\vh^{(l)}\|_2$
    \EndFor
    \State Output $\widehat{\mathcal S} \gets \bigcup_{l=1}^k \mathcal M_l$
\EndFor
\end{algorithmic}
\end{algorithm}

For evaluation, let $K(\vell_t)\subseteq[s]$ denote the indices of the true top-$p$ nearest sensors to $\vell_t$. We count time $t$ as successful if $K(\vell_t)\subseteq \widehat{\mathcal S}$, and report empirical accuracy as the fraction of time steps for which this holds.

\subsection{Modeling via Spline Fitting}
\label{subsec:spline}
While the propagation model described in Section \ref{subsec:model} provides a simple way to normalize and compare measurements, one can adopt a linear spline-based modeling approach under a Bayesian estimation setting to capture the heterogeneous and environment-dependent effects observed in practice.

For each sensor node $i$, we partition the range of observed distances into $L$ disjoint bins on a logarithmic scale. Within each bin, we fit a local log-distance model of the form
\[
\mu_{i,w}(d) = P_{o,i,w} - \eta_{i,w}\cdot 10\log_{10}(d),
\]
where $(P_{o,i,w}, \eta_{i,w})$ are bin-specific parameters for bin $w$, estimated from data. Imposing continuity constraints at bin boundaries, one obtains a piecewise-linear spline in the $\log_{10}(d)$ domain. The parameters can then be obtained by solving a constrained least-squares problem that minimizes the squared prediction error over all samples, subject to linear equality constraints enforcing continuity of $\mu_{i,w}(d)$ at adjacent bin endpoints. This results in a convex quadratic program with a unique global minimizer, which can be solved efficiently using standard solvers.

Let $(b_{i,w}, b_{i,w+1})$ denote the distance range of bin $w$ for sensor $i$. For any hypothesis $\vh$ and sensor $i$, we compute the distance $d_i = \|\vh - \vs_i\|$ and identify the corresponding bin $w^\star$ such that $d_i \in (b_{i,w^\star}, b_{i,w^\star+1})$. The likelihood is then evaluated as
\[
p(z_i \mid \vh) = \mathcal N\!\big(z_i;\, \mu_{i,w^\star}(d_i),\, \sigma^2_{i,w^\star}\big),
\]
where $\sigma^2_{i,w^\star}$ is the empirical residual variance estimated within that bin.

Since all spline parameters are learned offline, online likelihood evaluation reduces to a constant-time bin lookup followed by a Gaussian density evaluation. A visualization of representative spline fits is shown in Fig.~\ref{fig:spline_fitting}.

\begin{figure}
\centering
\includegraphics[width=0.8\linewidth]{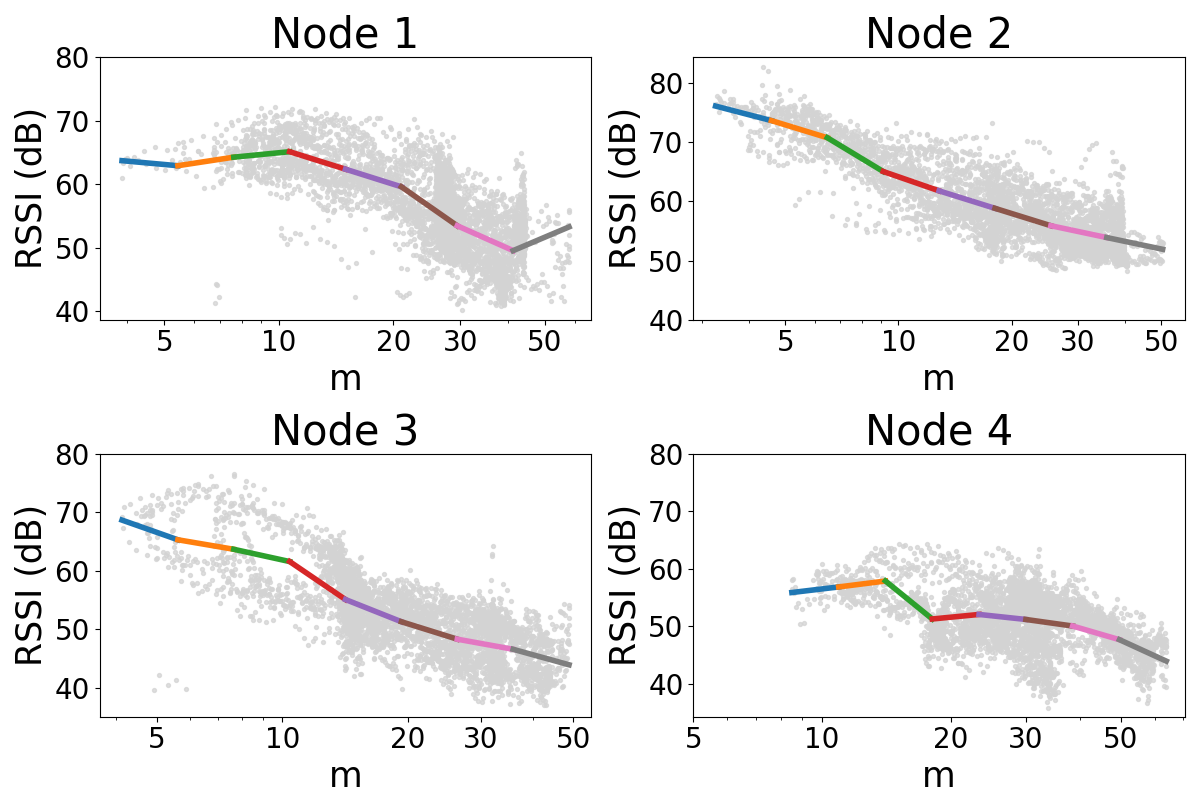}
\caption{Fitted 8-bin linear spline models for various nodes.}

\label{fig:spline_fitting}
    \vspace{-2em}
\end{figure}


\subsection{Localizing Multiple Targets}

A further advantage of the Bayesian-based estimation algorithm described in Section~\ref{subsec:km-alg} is that it is possible to directly extend the procedure to $N\geq 2$ simultaneously moving targets. A naive joint formulation would place the joint state in $\mathcal H^N$, yielding posterior computations that scale as $|\mathcal H|^N$ per time step. To obtain a scalable procedure, we instead restrict the search space by maintaining, for each target $r\in[N]$, a time-varying \emph{local} hypothesis set $\mathcal H_r(t)\subseteq\mathcal H$.

At synchronization times $t\in\{t_{\mathrm{sync}},2t_{\mathrm{sync}},\ldots\}$, we assume access to the true locations $\{\vell_t^{(r)}\}_{r=1}^N$ (e.g., from a higher-fidelity modality). We initialize each $\mathcal H_r(t)$ as a square subgrid of $\mathcal H$ centered at $\vell_t^{(r)}$ with a prescribed initial side length. Between synchronization times, we keep the center fixed and expand the region in discrete steps to account for target motion (equivalently, by adding an outer ``ring'' of grid points), where the step size is defined by the upper bound of the target's velocity. Thus, at time $t$ we perform inference only over the reduced joint 
\vspace{-0.1em}
\[
\mathcal H_1(t)\times\cdots\times\mathcal H_N(t),
\]
whose size is typically far smaller than $|\mathcal H|^N$.

Let $\mathbf{\bar{h}} \triangleq (\vh^{(1)},\dots,\vh^{(N)})$ denote a joint hypothesis with $\vh^{(r)}\in\mathcal H_r(t)$. Given a uniform prior over $\mathcal H_1(t)\times\cdots\times\mathcal H_N(t)$, the posterior satisfies
\vspace{-0.5em}
\[
p(\mathbf{\bar{h}}\mid\{z_i\}_{i=1}^{s}) \propto \prod_{i=1}^s p(z_i\mid \mathbf{\bar{h}}),
\]

where we model multi-target superposition in the scalar domain. Letting $\mu_{i,r}(\vh)$ denote the learned mean RSS in dB during modeling at sensor $i$ for target $r$ at location $\vh$, and defining the corresponding scalar-domain mean RSS $P_{i,r}(\vh)\triangleq 10^{\mu_{i,r}(\vh)/10}$, the measurement $z_i$ at sensor $i$ is commonly modeled to be generated by,

\[
z_i = 10\log_{10}\!\left(\sum_{r=1}^N  P_{i,r}\!\big(\vh^{(r)})\right) + \epsilon_i,
\]

where $\epsilon_i\sim\mathcal N(0,\sigma_i^2)$ is the zero-mean independent Gaussian noise attributed to the sensor.

We select the top-$k$ joint hypotheses under this posterior, denoted by $\text{TopK}$. For each selected joint hypothesis $\mathbf{\bar{h}}$, and for each target component $\vh^{(r)}$, we compute the index set of the $m$ closest sensors to $\vh^{(r)}$:
\[
\mathcal M_r(\mathbf{\bar{h}})
=
\operatorname{BottomM}_{i\in[s]} \ \|\vs_i-\vh^{(r)}\|.
\]

The final recommended set is the union over the selected joint hypotheses and all targets,
\[
\widehat{\mathcal S}
=
\bigcup_{r=1}^N \ \bigcup_{\mathbf{\bar{h}}\in \text{TopK}} \mathcal M_r(\mathbf{\bar{h}}),
\]
yielding a single set intended to cover sensors near all targets.

For evaluation, let $K^{(r)}(\vell_t^{(r)})\subseteq[s]$ denote the indices of the true top-$p$ nearest sensors to target $r$ at time $t$. We declare success at time $t$ if
$
\bigcup_{r=1}^N K^{(r)}(\vell_t^{(r)}) \ \subseteq\ \widehat{\mathcal S},
$
and report empirical accuracy as the fraction of successful time steps. 

The resulting procedure is summarized in Algorithm~\ref{alg:multi_km}.

\begin{algorithm}
\caption{Multi-Target Bayesian List Construction with Synchronized Local Grids}
\label{alg:multi_km}
\begin{algorithmic}[1]
\State Initialize centers $c_r\gets \text{null} ~\forall r\in[N]$ and counter $q\gets 0$
\For{$t=1,2,\dots$}
    \State Observe measurements $\{z_i\}_{i=1}^s$
    \If{$t \bmod t_{\mathrm{sync}} = 0$}
        \State Obtain true locations $\{\vell_t^{(r)}\}_{r=1}^N$, set $q \gets 0$
        \For{$r=1,\dots,N$}
            \State $c_r \gets \vell_t^{(r)}$
        \EndFor
    \Else
        \State $q \gets q+1$
    \EndIf

    \For{$r=1,\dots,N$}
        \State $\mathcal H_r(t)\gets$ square subgrid with center $c_r$, side $B_q$
    \EndFor

    \State Compute $p(\bar{\mathbf h}\mid \{z_i\}_{i=1}^s)$ over $\bar{\mathbf h}\in \mathcal H_1(t)\times\cdots\times\mathcal H_N(t)$
    \State $\widehat{\mathcal H}_{\mathrm{top}}(t)\gets \operatorname{TopK}\!\big(p(\bar{\mathbf h}\mid \{z_i\}_{i=1}^s)\big)$
    \State $\widehat{\mathcal S}\gets \emptyset$
    \For{each $\bar{\mathbf h}=(\vh^{(1)},\dots,\vh^{(N)})\in \widehat{\mathcal H}_{\mathrm{top}}(t)$}
        \For{$r=1,\dots,N$}
            \State $\mathcal M_r(\bar{\mathbf h}) \gets \operatorname{BottomM}_{i\in[s]}\|\vs_i-\vh^{(r)}\|$
            \State $\widehat{\mathcal S}\gets \widehat{\mathcal S}\cup \mathcal M_r(\bar{\mathbf h})$
        \EndFor
    \EndFor
    \State Output $\widehat{\mathcal S}$
\EndFor
\end{algorithmic}
\end{algorithm}

\section{Experimental Results}
\label{sec:exp}
In this section, we evaluate the proposed algorithms using real-world acoustic data collected from outdoor experiments involving both single and multiple moving vehicles. The data were obtained from a sensing field of approximately $10{,}000~\text{m}^2$ containing several buildings and minimal structured background interference, with ambient noise primarily consisting of wind and natural outdoor sounds.

In each experiment, one or two ATVs equipped with GPS transmitters followed the predefined trajectory (Fig.~\ref{fig:gq_map_scrambled}) at an average speed of approximately $3~\text{m/s}$
 for $20$-$30$ minutes. Ten Raspberry Pi nodes were deployed across the field, each with a single-channel USB microphone transmitting received signal strength every $200\,\mathrm{ms}$ 
 synchronized with GPS measurements. Implementation details of the proposed algorithms are provided in Appendix~\ref{app:exps} \cite{Appendix}. Performance is measured in terms of accuracy, defined as the fraction of trials in which the output set contains the $p$ closest sensor nodes to (each) target.

\subsection{Single Vehicle Experiment}

\begin{figure}[t]
    \centering
    \includegraphics[width=0.7\linewidth]{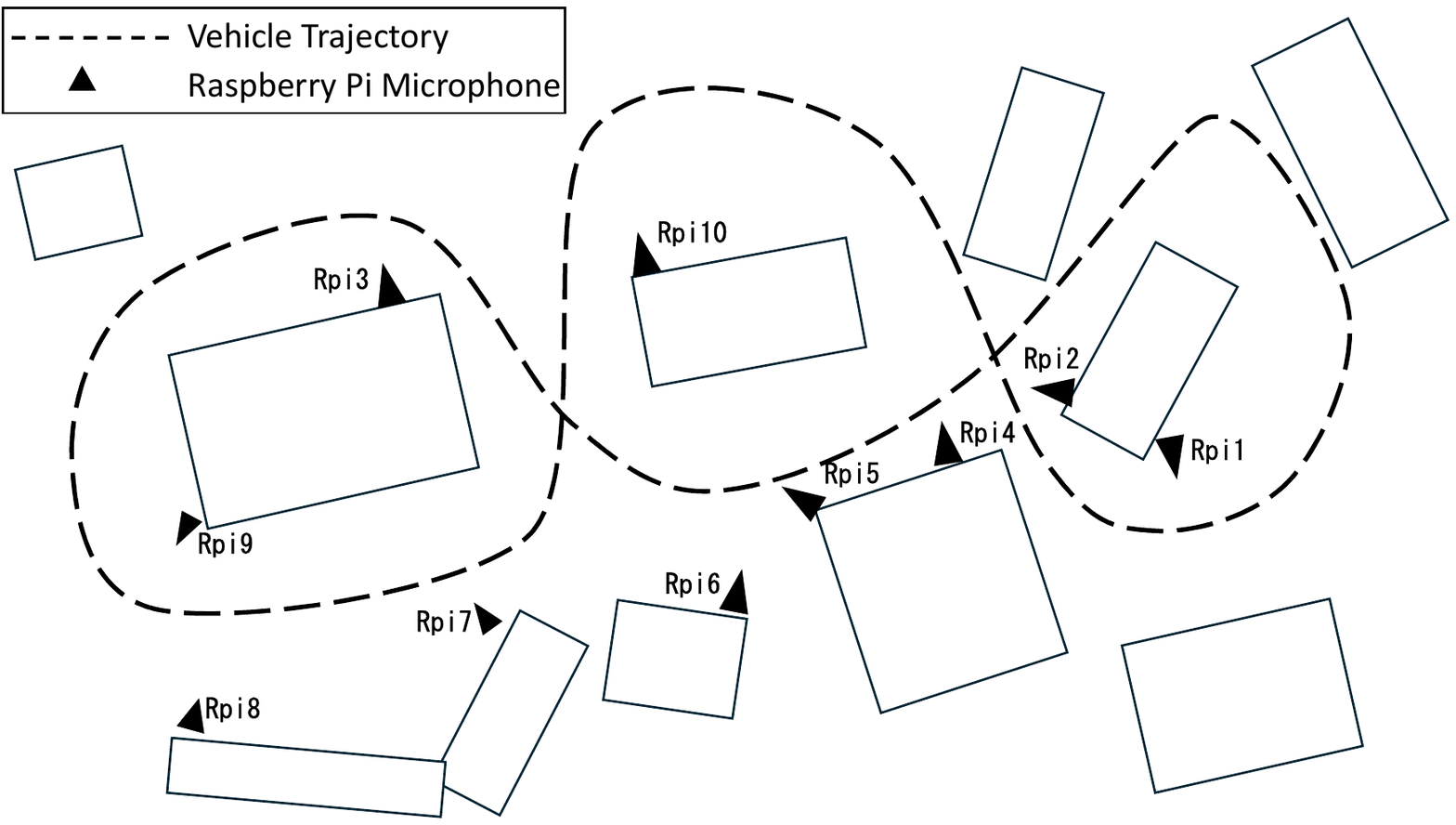}
    \caption{Vehicle trajectory and sensor node placements}
    \label{fig:gq_map_scrambled}
    \vspace{-2em}
\end{figure}

We report performance as a function of the parameter triplet $(k,m,p)$, where $(k,m)$ are the list-construction parameters of Algorithm~\ref{alg:k_m_alg}, specifying the number of top-likelihood hypotheses considered and the number of nearest sensors selected per hypothesis, respectively, and $p$ denotes the number of true nearest sensors that must be contained in the output set for the trial to be counted as correct.


As shown in Fig.~\ref{fig:single}, 
accuracy increases with  $m$ for both methods, while Algorithm~\ref{alg:k_m_alg} 
consistently outperforms the normalized max-value selection baseline described in Section~\ref{sec:p_closest analysis}, especially at larger $p$.
In this experimental setup, we fix $k=1$, so the average output set size of Algorithm~\ref{alg:k_m_alg} is $m$, while the normalized max-value baseline is allowed to output the $m$ largest normalized measurements (rather than $p$). Fig.~\ref{fig:single_set} further illustrates the relationship between accuracy and the average output set size for different values of $p$ under Algorithm \ref{alg:k_m_alg}. We observe a reliable, monotonic increase in accuracy as the output set size grows, highlighting the fundamental tradeoff between set size and top-$p$ containment performance.


\begin{figure}
\centering
\includegraphics[width=0.85\linewidth]{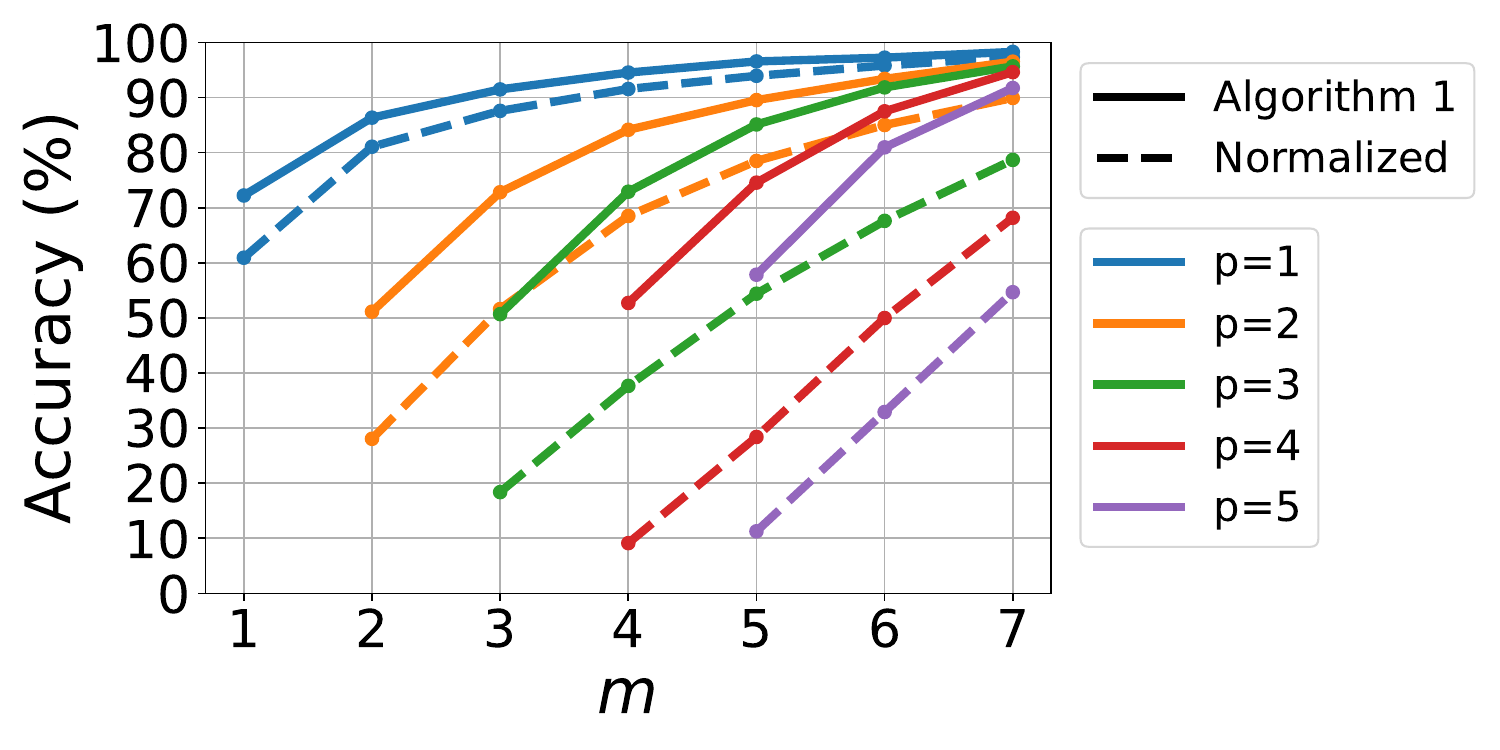}
\captionof{figure}{Algorithm 1 and the Normalized Max Value Selection described in Section \ref{sec:p_closest analysis} for different values of $p$ where $k=1$} 
\label{fig:single}
\end{figure}

\begin{figure}
\centering
\includegraphics[width=0.6\linewidth]{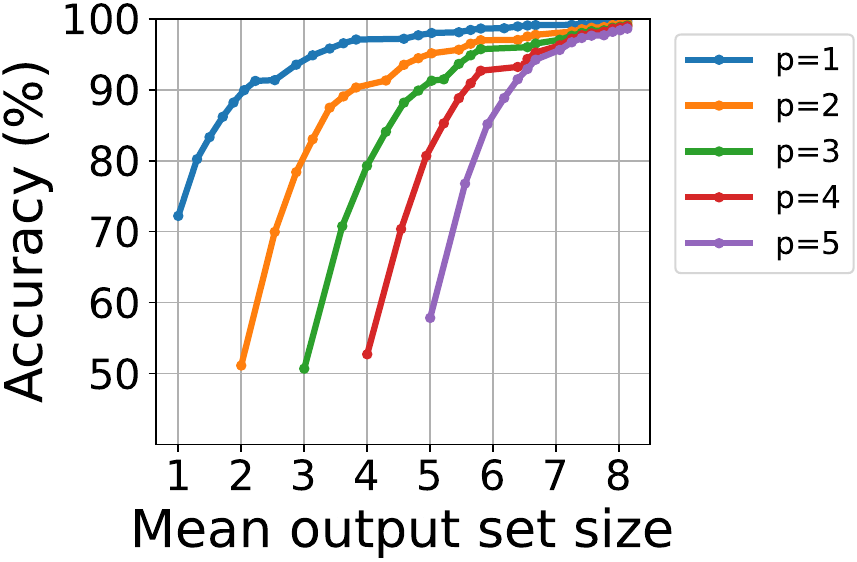}
\captionof{figure}{Accuracy vs average output set size for Algorithm \ref{alg:k_m_alg}} 
\label{fig:single_set}
\vspace{-1.7em}
\end{figure}

\subsection{Multiple Vehicle Experiment}

We next evaluate the multi-target list construction algorithm described in Algorithm \ref{alg:multi_km} in a two-vehicle setting, where two ATVs traverse the sensing area simultaneously.

Fig.~\ref{fig:multiple} reports the top-$p$ containment accuracy as a function of the synchronization interval $t_\text{sync}$ for different values of $p$, with fixed parameters $k=3$ and $m=5$. We observe that, as expected, accuracy degrades as the synchronization interval increases, since longer intervals lead to larger local grids and consequently greater uncertainty in the posterior.
\vspace{-2em}

\begin{figure}
\centering
\includegraphics[width=0.75\linewidth]{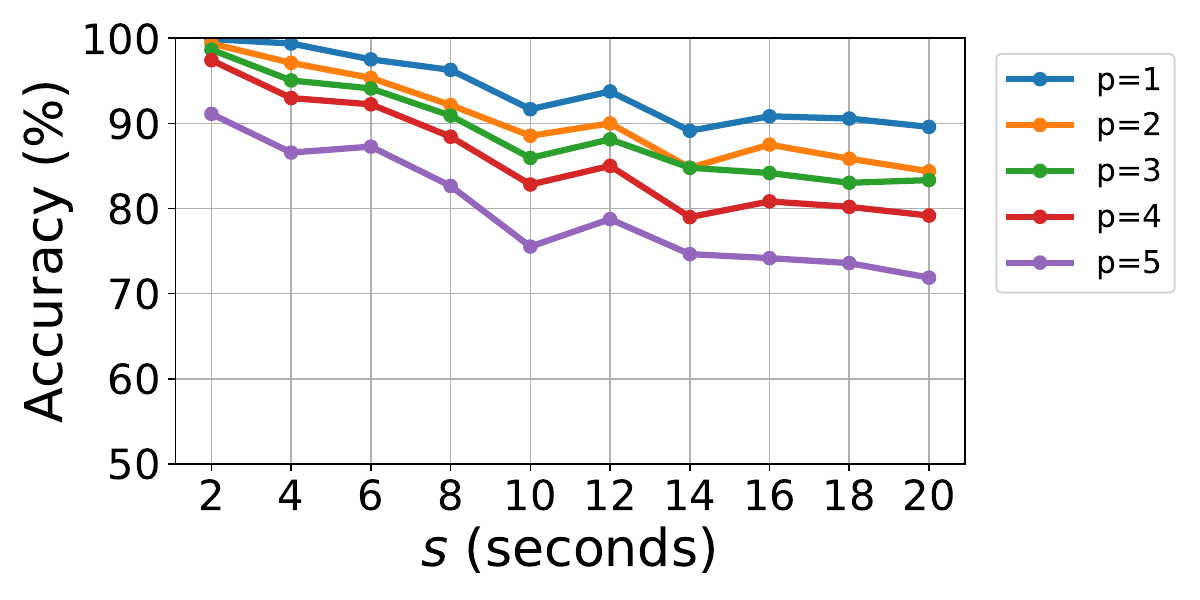}
\captionof{figure}{Accuracy vs synchronization interval $t_\text{sync}$ for different $p$ values ($k=3$, $m=5$), with average output set size $8$} 
\label{fig:multiple}
\end{figure}


\newpage

\IEEEtriggeratref{11}
\bibliographystyle{IEEEtran}
\bibliography{reference}

\onecolumn 
\appendices 

\section{Proof of Theorem \ref{thm:top-p-error}}

\label{app:top-p-proof}

For all $\vh\in\mathcal{H}$, let $\vmu(\vh)\triangleq[\mu_1(\vh),\ldots,\mu_s(\vh)]^\top$, and let $K(\vh)\subseteq[s]$ denote the true unordered top-$p$ index set, with complement $K(\vh)^c=[s]\setminus K(\vh)$. Define the $p(s-p)$-dimensional difference vector
\[
\mathbf D_{K(\vh)}(\vh)
\triangleq
\big[\, \tilde z_i- \tilde z_j \,\big]_{(i,j)\in K(\vh)\times K(\vh)^c}
\in\mathbb R^{p(s-p)}.
\]
Then $\widehat K=K(\vh)$ if and only if $\mathbf D_{K(\vh)}(\vh)\succeq \mathbf 0$. Letting $\vz = [\tilde z_1,\ldots,\tilde z_s]^\top$, we can write $\mathbf D_{K(\vh)}(\vh)=A_{K(\vh)}\vz$, where $A_{K(\vh)} \in \mathbb{R}^{p(s-p)\times s}$ is the matrix whose row indexed by a pair $(i,j)\in K(\vh)\times K(\vh)^c$ equals $e_i^\top - e_j^\top$, and hence

\[
\mathbf D_{K(\vh)}(\vh)\mid \vh
\sim
\mathcal N\!\big(\, \mathbf m_{K(\vh)}(\vh),\, \mSigma_{K(\vh)} \big),
\]
where $\mathbf m_{K(\vh)}(\vh)\triangleq A_{K(\vh)}\vmu(\vh)$ and $\mSigma_{K(\vh)}\triangleq A_{K(\vh)}\mSigma_s A_{K(\vh)}^\top$ with $\mSigma_s = \text{diag}(\tilde \sigma^2_1,\dots,\tilde\sigma_s^2)$. This corresponds to, for $(i,j),(i',j')\in K(\vh)\times K(\vh)^c$,
\begin{align*}
&\big(\mathbf m_{K(\vh)}(\vh)\big)_{(i,j)}=\mu_i(\vh)-\mu_j(\vh)
\end{align*}
and,
\begin{align*}
&\text{Cov}(\tilde z_i-\tilde z_j,\;\tilde z_{i'}-\tilde z_{j'})
=
\tilde \sigma_i^2\mathbf 1\{i=i'\}
+\tilde\sigma_j^2\mathbf 1\{j=j'\}- \tilde\sigma_i^2\mathbf 1\{i=j'\}
-\tilde \sigma_j^2\mathbf 1\{j=i'\}.
\end{align*}
Therefore, the conditional correct-classification probability is given by the orthant probability of a multivariate Gaussian,
\begin{align*}
\mathbb P(\widehat K=K(\vh)\mid \vh)
&=\mathbb P(\mathbf D_{K(\vh)}(\vh)\succeq \mathbf 0\mid \vh) = \mathbb P(-\mathbf D_{K(\vh)}(\vh)\preceq \mathbf 0\mid \vh)=\Phi_{p(s-p)}\!\left(\mathbf 0;\ -\mathbf m_{K(\vh)}(\vh),\ \mSigma_{K(\vh)}\right).
\end{align*}
Finally, under a uniform prior on $\vell_t$ over the square grid $\mathcal H$,
\begin{align*}
P_e^{(p)}
&=1-\frac{1}{|\mathcal H|}\sum_{\vh\in\mathcal H}\mathbb P(\widehat K=K(\vh)\mid \vh)=1-\frac{1}{|\mathcal H|}\sum_{\vh\in\mathcal H}
\Phi_{p(s-p)}\!\left(\mathbf 0;\ -\mathbf m_{K(\vh)}(\vh),\ \mSigma_{K(\vh)}\right).
\end{align*}

\section{Proof of Corollary 1}
\label{app:cor}

For any fixed $\vh\in\mathcal H$, the event \{$\widehat K=K(\vh)$\} is equivalent to the system of inequalities $\tilde z_i\ge \tilde z_j$ for all $i\in K(\vh)$ and $j\in K^c$, which is equivalent to
\[
\min_{i\in K(\vh)} \tilde z_i \;>\; \max_{j\in K(\vh)^c} \tilde z_j,
\]
since ties occur with probability zero under a continuous distribution. Define
\[
U(\vh)\triangleq \min_{i\in K(\vh)} \tilde z_i,\qquad V(\vh)\triangleq \max_{j\in K(\vh)^c} \tilde z_j.
\]
Then $\mathbb P(\widehat K=K(\vh)\mid \vh)=\mathbb P(U(\vh)>V(\vh)\mid \vh)$, and by conditioning on $V(\vh)$,
\[
\mathbb P(U(\vh)>V(\vh)\mid \vh)=\int_{-\infty}^{\infty}\mathbb P(U(\vh)>v\mid \vh)\, f_{V(\vh)}(v\mid \vh)\, dv.
\]
For the first term, by independence within $\tilde z_i$,
\begin{align*}
\mathbb P(U(\vh)>v\mid \vh)
&=
\prod_{i\in K(\vh)}\mathbb P(\tilde z_i>v\mid \vh)=
\prod_{i\in K(\vh)}\Big(1-\Phi\!\Big(\tfrac{v-\mu_i(\vh)}{\tilde \sigma_i}\Big)\Big).
\end{align*}
For the second term, since $V(\vh)=\max_{j\in K(\vh)^c} \tilde z_j$ and the variables $\{\tilde z_j:j\in K(\vh)^c\}$ are independent,
\[
\mathbb P(V(\vh)\le v\mid \vh)
=
\prod_{j\in K(\vh)^c}\Phi\!\Big(\tfrac{v-\mu_j(\vh)}{\tilde \sigma_j}\Big).
\]
Differentiating the above with respect to $v$ gives the density
\[
f_{V(\vh)}(v\mid \vh)
=
\sum_{j\in K(\vh)^c}
\frac{1}{\tilde \sigma_j}\phi\!\Big(\tfrac{v-\mu_j(\vh)}{\tilde\sigma_j}\Big)
\prod_{\ell\in K(\vh)^c\setminus\{j\}}
\Phi\!\Big(\tfrac{v-\mu_\ell(\vh)}{\tilde\sigma_\ell}\Big).
\]

Hence, the probability of the correct event for all $\vh \in \mathcal H$ is,

\begin{align*}
\mathbb P(\widehat K=K(\vh)\mid \vh)
=\int_{-\infty}^{\infty}
\Bigg[
\prod_{i\in K(\vh)}
\Big(1-\Phi\!\Big(\tfrac{v-\mu_i(\vh)}{\tilde \sigma_i}\Big)\Big)
\Bigg]
\Bigg[
\sum_{j\in K(\vh)^c}
\frac{1}{\tilde\sigma_j}\phi\!\Big(\tfrac{v-\mu_j(\vh)}{\tilde\sigma_j}\Big)
\prod_{\ell\in K(\vh)^c\setminus\{j\}}
\Phi\!\Big(\tfrac{v-\mu_\ell(\vh)}{\tilde\sigma_\ell}\Big)
\Bigg]\, dv,
\end{align*}

\clearpage
\section{Details on Experimental Results}

\label{app:exps}

\subsection{Fitted Models for All Sensor Nodes Used in Section~\ref{sec:exp}}

Figure~\ref{fig:pathloss10} shows the fitted propagation models for all sensor nodes used in the experiments of Section~\ref{sec:exp}, based on the log-linear path loss model described in Section~\ref{subsec:model}. Similarly, Figure~\ref{fig:spline10} presents the fitted models for all sensor nodes using the spline-based approach described in Section~\ref{subsec:spline}. 

\begin{figure}[h]
\centering
\begin{minipage}{0.48\linewidth}
    \centering
    \includegraphics[width=\linewidth]{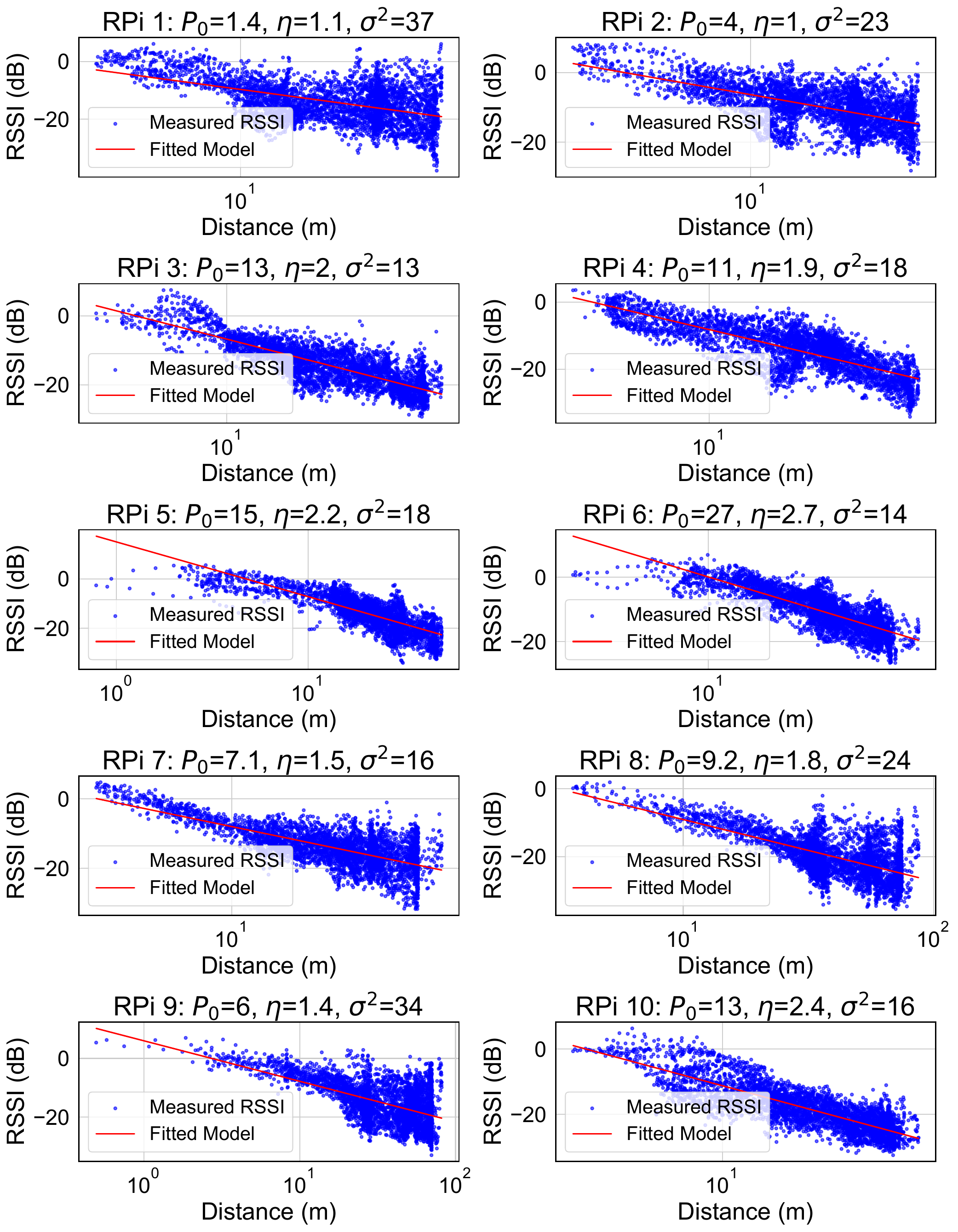}
    \captionof{figure}{Fitted propagation models for all 10 nodes}
    \label{fig:pathloss10}
\end{minipage}\hfill
\begin{minipage}{0.52\linewidth}
    \centering
    \includegraphics[width=\linewidth]{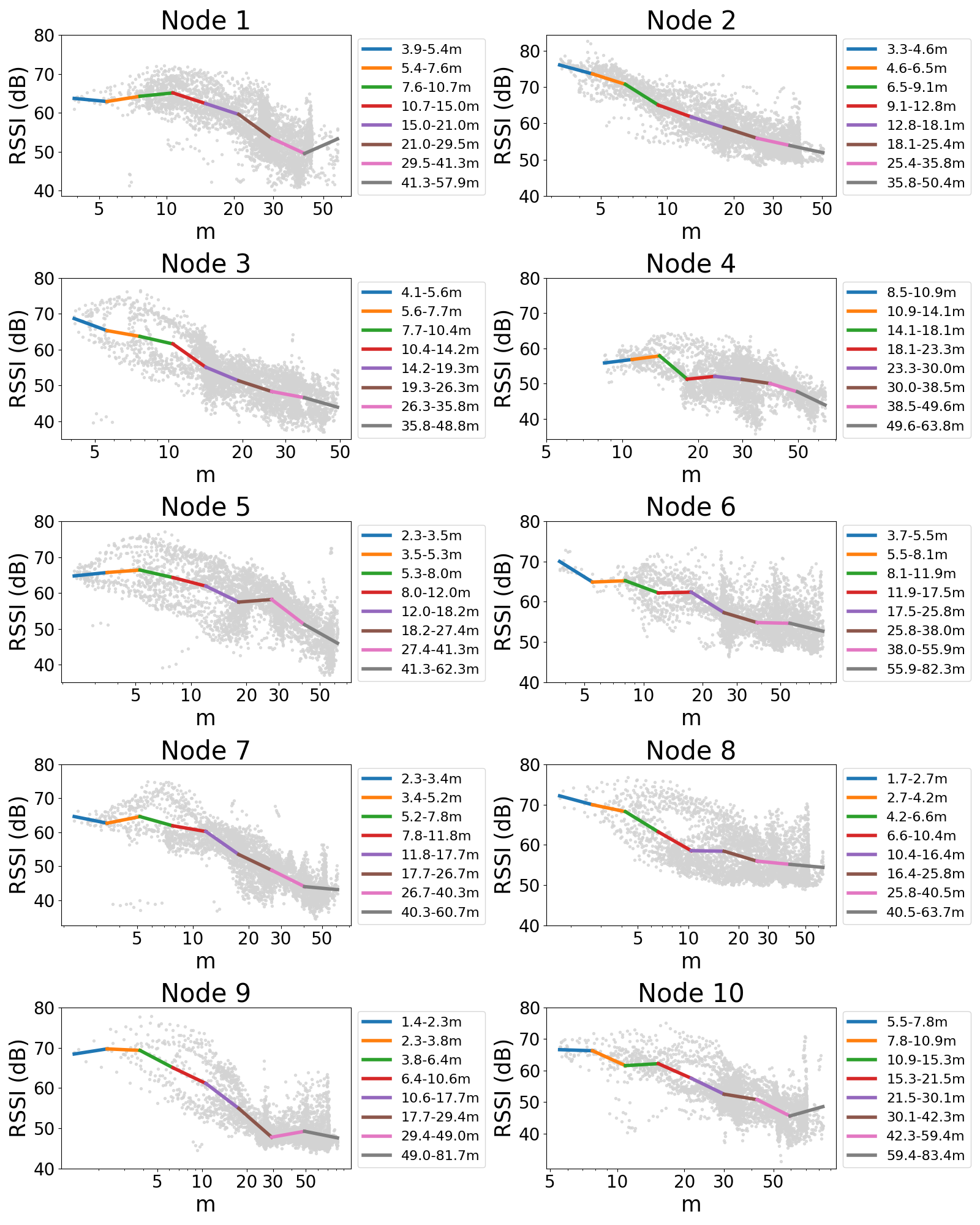}
    \captionof{figure}{Fitted 8-bin linear spline models for all 10 nodes}
    \label{fig:spline10}
\end{minipage}
\end{figure}

\subsection{Single Vehicle Experiment}
Table~\ref{tab:single_params} summarizes additional experimental parameters used in the single-vehicle experiment of Section~\ref{sec:exp} and in Figures~\ref{fig:single} and~\ref{fig:single_set}. The spatial hypothesis set $\mathcal{H}$ is constructed as a $20\times20$ grid with uniform spacing of $7\,\mathrm{m}$ between adjacent locations, yielding $|\mathcal{H}|=400$ hypotheses. The propagation model is discretized into $L=8$ bins and used to construct the fitted likelihood functions described in Section~\ref{subsec:spline}. In the experiments of Fig.~\ref{fig:single_set}, the list construction parameters $k$ and $m$ are varied over $\{1,\dots,7\}$, and the output set size is recorded after each list construction at each time $t$.

\begin{table}[h!]

\centering
\caption{Additional Parameters utilized in Fig. \ref{fig:single}}
\renewcommand{\arraystretch}{1.3}
\begin{tabular}{|c|c|l|}
\hline
\textbf{Parameter} & \textbf{Value(s)} & \textbf{Description}                             \\ \hline
\(L\) & 8 & Number of bins. \\ \hline
Hypothesis spacing      & 7 m             & Spacing in between each hypothesis location.   \\ \hline
\(\mathcal{H}\)   & $20^2=400$      & Number of hypotheses.                \\ \hline
\end{tabular}
\vspace{0.3cm}
\label{tab:single_params}
\end{table}

\newpage

\subsection{Multiple Vehicle Experiment}

The multi-vehicle experiments follow the procedure described in Algorithm~\ref{alg:multi_km}. The additional parameters governing spline modeling, hypothesis grid construction, and local grid evolution are summarized in Table~\ref{tab:table3}. Between synchronization times, let $q\in\{0,1,\dots,t_{\mathrm{sync}}-1\}$ denote the elapsed time index and define the grid side length as $B_q := B_{\lfloor q / t_{\mathrm{inc}} \rfloor}$. Figure~\ref{fig:multi_size} shows the corresponding average output set size for the experiments reported in Figure~\ref{fig:multiple}.

\begin{table}[h!]
\centering
\caption{Additional Parameters utilized in Fig. \ref{fig:multiple}}
\renewcommand{\arraystretch}{1.3}
\begin{tabular}{|c|c|l|}
\hline
\textbf{Parameter} & \textbf{Value(s)} & \textbf{Description}                                   \\ \hline
\(L\) & 8 & Number of bins. \\ \hline
Hypothesis spacing      & 6 m             & Spacing in between each hypothesis location.   \\ \hline
\(\mathcal{H}\)   & $27^2=729$       & Number of hypotheses.                \\ \hline
\(q\) & 0, 1, 2, ..., \(t_{sync}-1\) & Time indices in interval of $t_{sync}$ seconds long.\\ \hline
\(t_{inc}\)       & 2 s                    & Time for next hypothesis grid increase.  \\ \hline
\(B_q\)       & 3, 5, 7, ..., 729 hypothesis locations & Square grid side length for \(q = 0, t_{inc}, 2t_{inc}, 3t_{inc}, ... = 0, 2\text{ s}, 4\text{ s}, 6\text{ s}, ...\) \\ \hline

\end{tabular}
\vspace{0.3cm}
\label{tab:table3}
\end{table}

\begin{figure}[h!]
\centering
\includegraphics[width=0.6\linewidth]{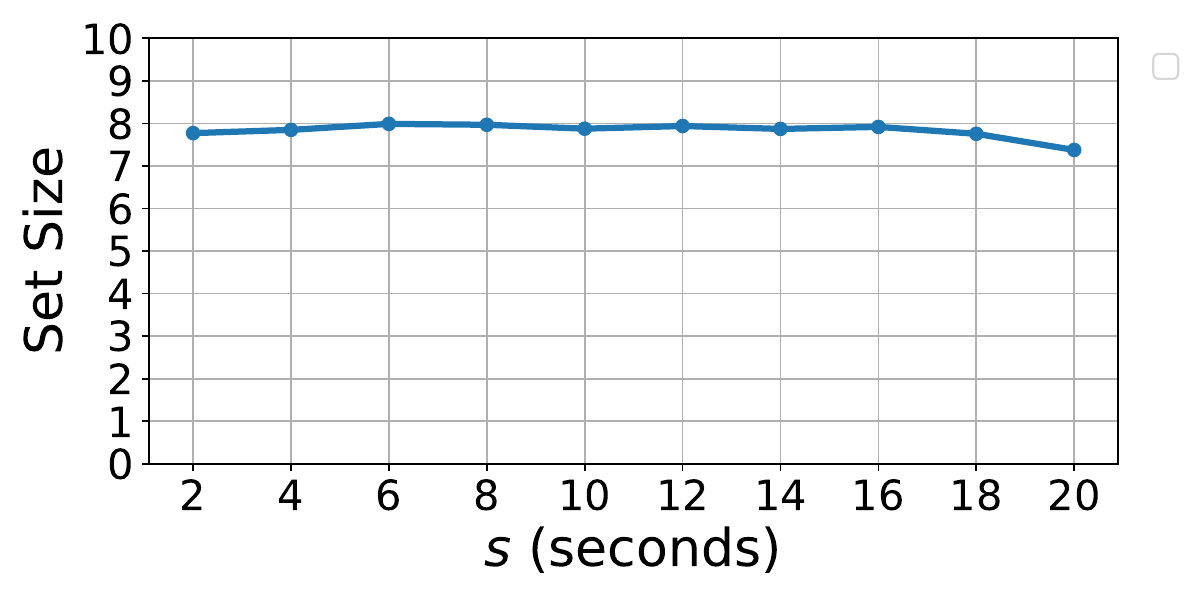}
\captionof{figure}{Average set size corresponding with Fig. \ref{fig:multiple}} 
\label{fig:multi_size}
\end{figure}

\end{document}